\def\Z{\mathbb Z}
\def\Ta{\mathcal T}
\def\ga{\mathcal G}
\def \F{\mathbb F}
\def\Fq{\mathbb{F}_q}
\DeclareMathOperator{\cyc}{Cyc}
\def\ord{\mathop{\rm ord}\nolimits}
\theoremstyle{plain}
\newtheorem{theorem}{Theorem}[section]
\newtheorem{lemma}[theorem]{Lemma}
\newtheorem{definition}[theorem]{Definition}
\newtheorem{proposition}[theorem]{Proposition}
\newtheorem{example}[theorem]{Example}
\author[J. A. Oliveira]{Jos\'e Alves Oliveira}
\author[F. E. Brochero Mart\'{\i}nez]{F. E. Brochero Mart\'{\i}nez}
\address{Departamento de Matem\'{a}tica,
	Universidade Federal de Minas Gerais,
	UFMG,
	Belo Horizonte MG (Brazil),
	30123-970}
\email{joseufmg@gmail.com}
\email{fbrocher@mat.ufmg.br }
\date{\today
}
\keywords{	Functional graph, polynomial maps, finite fields, finite abelian group}
\subjclass[2020]{37P25 (primary) and 05C20(secondary)}
\title{Dynamics of polynomial maps over finite fields}
\begin{document}

\begin{abstract}
		Let $\mathbb{F}_q$ be a finite field with $q$ elements and let $n$ be a positive integer. In this paper, we study the digraph associated to the map $x\mapsto x^n h(x^{\frac{q-1}{m}})$, where $h(x)\in\mathbb{F}_q[x].$ We completely determine the associated functional graph of maps that satisfy a certain condition of regularity. In particular, we provide the functional graphs associated to monomial maps.  As a consequence of our results, the number of connected components, length of the cycles and number of fixed points of these class of maps are provided.
\end{abstract}
	
\maketitle
	
\section{Introduction}

The iteration of polynomial maps over finite fields have attracted interest of many authors in the last few decades (for example see \cite{gomez2014irreducible,heath2019irreducible,oliveira2020iterations, reis2020factorization}). The interest for these problems has increased mainly because of their applications in cryptography, for example see \cite{johnson2001elliptic,wiener1998faster}. The iteration of polynomial map over a finite field yields a dynamical system, that can be related to its functional graph, which is formally defined as follows. Let $\Fq$ be a finite field with $q$ elements and let $f(x)\in\Fq[x]$. The functional graph associated to the pair $(f,\Fq)$ is the directed graph $\mathcal{G}\left(f / \Fq\right)$ with vertex set $V=\Fq$ and directed edges $A=\{(a,f(a)):a\in \Fq\}$.

While the iteration of polynomial maps has been widely studied, a complete characterization of their functional graphs has not been provided. Even though, many particular results in this direction are known. For example, the functional graphs and related questions are known for the following classes of polynomials:
\begin{enumerate}[label=\upshape(\roman*)]
	\item $f(x)=x^2$ over prime fields~\cite{rogers1996graph};
	\item $f(x)=x^n$ over prime fields~\cite{chou2004cycle};
	\item Chebyshev polynomials~\cite{gassert2014chebyshev};
	\item Linearized polynomials~\cite{panario2019functional}.
\end{enumerate}
Some other functions and problems concerning the dynamics of maps over finite structures has been of interest~\cite{peinado2001maximal,qureshi2015redei,qureshi2019dynamics,ugolini2018functional}. For a survey of the results in the literature, see~\cite{martins20198}. This paper's goal is to provide the functional graph associated to a class of polynomials in a general setting. For a polynomial $f(x)=x^n h(x^{\frac{q-1}{m}})$ with index $m$, we study the dynamics of the map $a\mapsto f(a)$ in order to present its functional graph $\mathcal{G}\left(f / \Fq\right)$.  Throughout the paper, we write
$$\mathcal{G}\left(f / \Fq\right)=\mathcal{G}^{ (0)}_{f / \Fq}\oplus\mathcal{G}^{(1)}_{f / \Fq},$$
where $\mathcal{G}^{ (0)}_{f / \Fq}$ denotes the connected component of $\mathcal{G}\left(f / \Fq\right)$ containing $0\in\Fq$. The paper's aim is to present explicitly the two functional graphs $\mathcal{G}^{ (0)}_{f / \Fq}$ and $\mathcal{G}^{(1)}_{f / \Fq}$ under a natural condition. The condition imposed along the paper guarantees that all the trees attached to cyclic points of $\mathcal{G}^{(1)}_{f / \Fq}$ are isomorphic. Our main results are essentially presented in two theorems. Theorem~\ref{item603} provides the component $\mathcal{G}^{ (0)}_{f / \Fq}$ that contains the vertex $0$ and Theorem~\ref{item601} provides the functional graph $\mathcal{G}^{(1)}_{f / \Fq}$ that contains all vertices that are not connected to the vertex $0$. For a polynomial $f(x)=x^n h(x^{\frac{q-1}{m}})$, the associated polynomial $\psi_f(x)=x^n h(x)^{\frac{q-1}{m}}$ will play an important role in the proof of our main results. In particular, the dynamics of the polynomial $f$ over $\Fq$ is established in terms of the dynamics of the map $\psi_f$ over the set $\mu_m\subset\Fq$ of $m$-th roots of the unity. For more details, see Section~\ref{item624}.

 This paper is organized as follows. In Section~\ref{item622} we present the terminology used along the paper and provide our main results. Section~\ref{item623} presents preliminary results that will be used throughout the paper. The proof of our main results is provide in Section~\ref{item624}.

\section{Terminology and main results}\label{item622}
In this section we fix the notation used in the paper, present our main results and provide some major comments. We use the same terminology as in \cite{qureshi2015redei,qureshi2019dynamics,qureshi2021functional}. Let $\alpha$ be a generator of the multiplicative group $\Fq^*$.   Along the paper, we will make an abuse of terminology by saying that the graphs are equal if they are isomorphic. By rooted tree, we mean a directed rooted tree, where all the edges point towards the root. Also, we use the letter $\Ta$ to denote a rooted tree. The tree with a single vertex is denoted by $\bullet$. We use $\cyc(k,\Ta)$ to denote a directed graph composed by a cycle of length $k$, where every node of the cycle is the root of a tree isomorphic to $\Ta$. The cycle $\cyc(k,\bullet)$ is also denoted by $\cyc(k)$. We use $\oplus$ to denote the disjoint union of graphs, and, for a graph $\ga$,  $k\times\ga$ denotes the graph $\oplus_{i=1}^k \ga $. If $\ga=\oplus_{i=1}^s \Ta_i$, where $\Ta_1,\dots,\Ta_s$ are rooted trees, then $\langle\ga\rangle$ represents the rooted tree whose children are roots of rooted trees isomorphic to $\Ta_1,\dots,\Ta_s$. We observe that the connected components of a functional graph related to the iteration of a function over a finite set consists of cycles where each vertex of the cyclic is the root of a rooted tree. 

In this paper, we study a class of polynomial maps whose functional associated graph has a regularity in the trees attached to each vertex in a cycle. In order to describe such trees, we present the well-known notion of elementary trees.

\begin{definition} For a non increasing sequence of positive integers $V=\left(v_{1}, v_{2}, \ldots, v_{D}\right)$, the rooted tree $\Ta_{V}$  is defined recursively as follows:
$$
\left\{\begin{array}{l}
	\Ta_{V}^{0}=\bullet,\\
	\Ta_{V}^{k}=\left\langle v_{k} \times \Ta_{V}^{k-1} \oplus \bigoplus_{i=1}^{k-1}\left(v_{i}-v_{i+1}\right) \times \Ta_{V}^{i-1}\right\rangle \quad \text { for } 1 \leq k\leq D-1,\\
	\Ta_{V}=\left\langle\left(v_{D}-1\right) \times \Ta_{V}^{D-1} \oplus \bigoplus_{i=1}^{D-1}\left(v_{i}-v_{i+1}\right) \times \Ta_{V}^{i-1}\right\rangle\text{ and }\\
	\Ta_{V}^{k}=\left\langle v_{k} \times \Ta_{V}^{k-1} \oplus \bigoplus_{i=1}^{k-1}\left(v_{i}-v_{i+1}\right) \times \Ta_{V}^{i-1}\right\rangle \quad \text { for }  k\geq D,
\end{array}\right.$$
where $v_i=v_D$ for all $i\ge D+1$.
\end{definition}

The graph $\Ta_{V}$ is called \textit{elementary tree}. Elementary trees play an important role in the study of functional graphs over finite fields, for example see~\cite{gassert2014chebyshev,panario2019functional,qureshi2015redei,qureshi2019dynamics,qureshi2021functional}. Along this text, the elementary trees will appear in our main statements. Indeed all the trees attached to vertices in cycles in the functional graphs arising from the maps we study are elementary trees.
For more details about this, see Lemmas~\ref{item610} and~\ref{item613}.

Throughout the paper, we use $\mu_m$ to denote the $m$-powers of $\Fq^*$ and $\Z_+$ to denote the set of positive integers. $\Z_d$ is the additive group of integers modulo $d$ and $\Z_d^*$ denotes the group of units of $\Z_d$. The order of an element $b\in\Z_d^*$ is denoted by $\ord_d(b)$. For a positive integer $d$, let
$$\mu(d)=\begin{cases}
	1,&\text{ if }d\text{ is square-free with an even number of prime factors;}\\
	-1,&\text{ if }d\text{ is square-free with an odd number of prime factors;}\\
	0,&\text{ if }d\text{ has a squared prime factor.}\\
\end{cases}$$
be the M{\"o}bius function.

In order to present our main results, we will follow the notation used in \cite{qureshi2021functional} denoting by $\gcd_n(v)$ the iterated $\gcd$ of $v$ relative to $n$, that is, $\operatorname{gcd}_{n}(v)=\left(v_{1}, \ldots, v_s\right)$, where
$$
v_{i}=\frac{\gcd(n^i,v)}{\gcd(n^{i-1},v)}\text{ for }i\ge 1
$$
and $s$ is the least positive integer such that $v_{s}=1$. This notion was introduced in \cite{qureshi2015redei}, where it was called $\nu$-series. An important property of $\operatorname{gcd}_{n}(v)=\left(v_{1}, \ldots, v_s\right)$ is that $v_1\cdots v_j=\gcd(n^j,v)$. This property will be used in the proof of our results.

Any polynomial $f(x)\in\Fq[x]$ satisfying $f(0)=0$ can be written uniquely as $f(x)=x^n h(x^{\frac{q-1}{m}})$, where $h(0)\neq 0$ and $m$ is minimal. The number $m$ is called the \textit{index} of the polynomial $f$. The index of polynomials play an important role in the study of polynomials over finite fields, for more details see~\cite{wang201915}. The index of polynomials over finite fields will be used in our main results. Along the paper, $f(x)=x^n h(x^{\frac{q-1}{m}})\in\Fq[x]$ is a polynomial with index $m$. We now present a notion that will be used to guarantee a certain regularity on the functional graph of $f(x)$.

\begin{definition}\label{item602}
	A polynomial $f(x)=x^n h(x^{\frac{q-1}{m}})\in\Fq[x]$ with index $m$ is said to be $m$-nice over $\Fq$ if the map $x\mapsto \psi_f(x)=x^n h(x)^{\frac{q-1}{m}}$ is an injective map from $\mu_m\backslash \psi_f^{(-1)}(0)$ to $\mu_m$.
\end{definition}
It is worth mentioning that $(\psi_f\circ x^{\frac{q-1}{m}})(a)=(x^{\frac{q-1}{m}}\circ f)(a)$ for all $a\in\Fq$. This fact is used along the proofs of our results. 
The map $\psi_f$ and  the fact that it commutes with $f$  play an important role in the study of permutation polynomials, for example see \cite{akbary2011constructing}.  In this paper, we present the dynamics of $f$ over $\Fq$ in terms of the dynamics of $\psi_d$ over $\mu_m$, that is usually a smaller set. In what follows, we present an example of polynomial that satisfy the notion of being nice. 

\begin{example}\label{item618}
	Let $g(x)=x^{15}h(x^{36})\in\F_{181}[x]$, where $h(x)=98x^4+68x^3+68x^2-6x-31$. Then $\psi_g(x)=x^{15}h(x)^{36}$. By straightforward computations, one can show that $\mu_5=\{1, 42, 59, 125, 135\}$ and $2$ is a primitive element of $\F_{181}$. Furthermore, 
	$$\psi_g(59)=42,\ \psi_g(42)=0,\ \psi_g(125)=125,\ \psi_g(1)=135\text{ and }\psi_g(135)=1.$$
	Therefore, $g(x)$ is $5$-nice over $\F_{181}$.
\end{example}

Throughout the paper, we let $\frac{q-1}{m}= \nu\omega$ where $\omega$ is the greatest divisor of $\tfrac{q-1}{m}$ that is relatively prime with $n$. Now we are able to present one of our main results. The following theorem provides the functional graph $\mathcal{G}^{(0)}_{f / \Fq}$ of $m$-nice polynomials.

\begin{theorem}\label{item603} Assume that $f(x)=x^n h(x^{\frac{q-1}{m}})$ is $m$-nice over $\Fq$ and let $d_j=\gcd(\nu,n^j)$. For each  $i=1,\dots,m+1$, let $r_i=|\{\xi\in\mu_m:\psi_{f}^{(i)}(\xi)=0\}|$. Then $\mathcal{G}^{(0)}_{f / \Fq}=\cyc(1,T)$, where
	$$T=\left\langle\bigoplus_{i=0}^{m-1} \left(\tfrac{(q-1)r_{i+1}}{md_{i}}-\tfrac{(q-1)r_{i+2}}{md_{i+1}}\right)\times\Ta_{\gcd_{n}(\nu)}^{i}\right\rangle.$$
\end{theorem}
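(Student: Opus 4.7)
The plan is to use the semi-conjugacy $x^{(q-1)/m}\circ f=\psi_f\circ x^{(q-1)/m}$ recorded just after Definition~\ref{item602} to reduce the dynamics of $f$ on the component of $0$ to the dynamics of $\psi_f$ on the much smaller set $\mu_m$, and then to read off the tree $T$ from the $\psi_f$-chains together with the coset structure in each fiber $F_\xi:=\{a\in\Fq^*:a^{(q-1)/m}=\xi\}$.

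First I would note that a nonzero $a\in\Fq$ belongs to $\mathcal{G}^{(0)}_{f/\Fq}$ if and only if $\psi_f^{(k)}(a^{(q-1)/m})=0$ for some $k\ge 1$. By the $m$-nice hypothesis, $\psi_f$ is injective on $\mu_m\setminus\psi_f^{(-1)}(0)$, so the $\xi\in\mu_m$ that ever reach $0$ arrange themselves into disjoint chains $\xi^{(k)}_{L_k}\to\cdots\to\xi^{(k)}_1\to 0$; this same injectivity shows that the number of chains of length at least $i$ is $r_i-r_{i-1}$ (with $r_0:=0$). Each fiber $F_\xi$ is a coset of $\mu_{\nu\omega}$ of size $\nu\omega=(q-1)/m$, and $f$ sends $F_\xi$ into $F_{\psi_f(\xi)}$ (or into $\{0\}$ when $\psi_f(\xi)=0$) as the twisted power map $a\mapsto h(\xi)\,a^n$. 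After identifying $F_\xi$ with $\mu_{\nu\omega}$ via a base point, this becomes $b\mapsto b^n$; because $\gcd(n,\omega)=1$, the image of the $j$-th iterate is a translate of $\mu_{\nu\omega/d_j}$ of size $(q-1)/(md_j)$.

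The heart of the argument is identifying the subtree attached to a vertex in a chain. For $a\in F_{\xi_i}$ along a chain of length $L_k$ put
\[
\rho(a):=\max\{\,j\ge 0:a\in f^{(j)}(F_{\xi_{i+j}})\,\}\in\{0,1,\dots,L_k-i\}.
\]
I would prove by induction on $\rho(a)$ that the subtree rooted at $a$ in $\mathcal{G}^{(0)}_{f/\Fq}$ is isomorphic to $\Ta_{\gcd_n(\nu)}^{\rho(a)}$. The base case $\rho=0$ is immediate ($a$ is a leaf, matching $\Ta^0=\bullet$). For the inductive step, the $d_1$ preimages of $a$ sweep out a single coset $b_0\mu_{d_1}\subset F_{\xi_{i+1}}$; intersecting this with the tower of nested images $f^{(k)}(F_{\xi_{i+k+1}})$, each a translate of the subgroup $\mu_{\nu\omega/d_k}$, distributes the preimages across $\rho$-values in exactly the pattern prescribed by the recursive definition of $\Ta^k_{\gcd_n(\nu)}$ with branching factors $v_k=d_k/d_{k-1}$.

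Granting this subtree identification, the theorem reduces to counting children of $0$ in $T$. These are the depth-one vertices $\bigcup_k F_{\xi_1^{(k)}}$, and within a fiber $F_{\xi_1^{(k)}}$ belonging to a chain of length $L_k$ the number of vertices with $\rho\ge j$ equals $(q-1)/(md_j)$ when $j\le L_k-1$ and $0$ otherwise. Summing over chains and telescoping $\{\rho\ge j\}\setminus\{\rho\ge j+1\}$ yields the stated coefficient $\tfrac{(q-1)r_{i+1}}{md_i}-\tfrac{(q-1)r_{i+2}}{md_{i+1}}$ of $\Ta_{\gcd_n(\nu)}^i$. The main obstacle is the subtree-shape induction: translating the coset arithmetic in $\mu_{\nu\omega}$ (how the coset of $\mu_{d_1}$ of preimages cuts through the nested subgroups $\mu_{\nu\omega/d_k}$) into the recursive elementary-tree definition requires careful bookkeeping, but once in place the remaining telescoping sum is routine.
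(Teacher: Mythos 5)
Your proposal follows essentially the same route as the paper: reduce to the dynamics of $\psi_f$ on $\mu_m$ via the semi-conjugacy, show that the predecessor trees hanging off the children of $0$ are elementary trees $\Ta_{\gcd_{n}(\nu)}^{i}$ by a regularity argument, and then count children of $0$ by depth and telescope. The paper packages the regularity step as Proposition~\ref{item609} (the number of $k$-distant predecessors of any vertex is $0$ or $\gcd(n^k,\nu)$, proved by counting solutions of a congruence coming from Equation~\eqref{item606}) together with Lemma~\ref{item610} (a $V$-regular tree of depth $k$ is isomorphic to $\Ta_V^k$); your coset computation in $\mu_{\nu\omega}$ and the induction on $\rho(a)$ is a concrete re-derivation of those same two facts, so the difference is one of packaging rather than substance.

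One step does not close as written. From injectivity you correctly deduce that the number of $\psi_f$-chains of length at least $i$ is $r_i-r_{i-1}$ under the literal definition $r_i=|\{\xi\in\mu_m:\psi_f^{(i)}(\xi)=0\}|$, which counts the $\xi$ at distance \emph{at most} $i$ from $0$ (since $\psi_f(0)=0$). Summing your per-fiber counts over chains then gives $(r_{j+1}-r_j)\tfrac{q-1}{md_j}$ children of $0$ whose subtree has depth at least $j$, and the telescoping produces the coefficient $(r_{i+1}-r_i)\tfrac{q-1}{md_i}-(r_{i+2}-r_{i+1})\tfrac{q-1}{md_{i+1}}$ for $\Ta_{\gcd_{n}(\nu)}^{i}$, not the stated one. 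The stated coefficient requires reading $r_i$ as the number of $\xi$ at distance \emph{exactly} $i$ from $0$ (that is, $\psi_f^{(i)}(\xi)=0$ and $\psi_f^{(i-1)}(\xi)\neq 0$); this is the convention the paper's proof and its worked example actually use, even though it conflicts with the definition given in the theorem statement. So the discrepancy is small and partly inherited from the statement itself, but your claim that "the remaining telescoping sum is routine" is not consistent with the convention you set up two sentences earlier; you need to fix the meaning of $r_i$ before the arithmetic matches the asserted formula.
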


We present now an example.

\begin{example} Let $g(x)\in\F_{181}[x]$ be defined as in Example~\ref{item618} and let notation be as in Theorem~\ref{item603}. Our goal is to apply Theorem~\ref{item603} for the polynomial $g(x)$. Since $n=15,m=5$ and $\tfrac{q-1}{m}=36$, we have that $\omega=4$,  $\nu=9$ and $\gcd_{n}(\nu)=(3,3,1)$. From Example~\ref{item618}, it follows that $r_1=r_2=1$ and $r_i=0$ for $i\ge 2$. Furthermore, $d_1=\gcd(9,15)=3$ and $d_i=\gcd(9,15^i)=9$ for $i\ge 2$. Therefore, Theorem~\ref{item603} states that $\mathcal{G}^{(0)}_{g / \F_{181}}=\cyc(1,T)$, where $$T=\left\langle 24 \times\Ta_{(3,3,1)}^{0}\oplus12\times \Ta_{(3,3,1)}^{1}\right\rangle.$$ Figure~\ref{item620} shows this functional graph.
	
	\begin{figure}[H]
		\centering
		\includegraphics[keepaspectratio,width=0.3\linewidth]{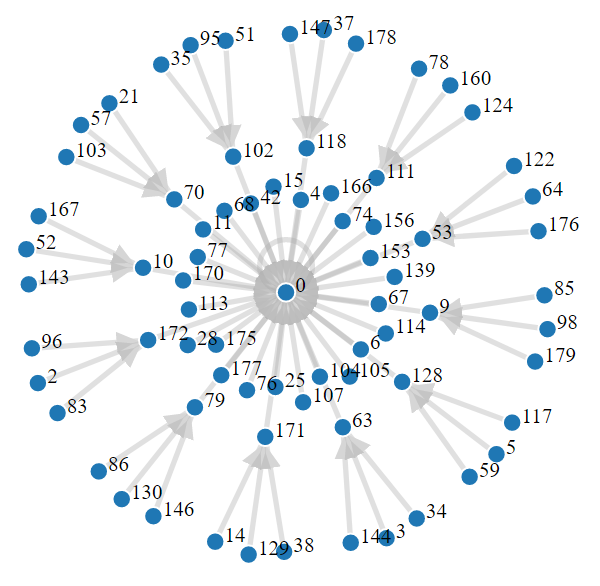}
		 \caption{\centering The conneted component of $\mathcal{G}(g/\F_{181})$ that contains the element $0\in\F_{181}$.}
		 \label{item620}
	\end{figure}
\end{example}
We now focus in the components of $\mathcal{G}(f/\Fq)$ that does not contains the element $0\in\Fq$. In order to present this graph, the following definition will be used.
\begin{definition} For a polynomial $f(x)\in\Fq[x]$ with index $m$, we define
	$$\psi_f^{-\infty}(0)=\{\gamma\in\mu_m:\psi_f^{(i)}(\gamma)=0\text{ for a positive integer }i\}.$$
\end{definition}

We note that if $f(x)=x^n h(x^{\frac{q-1}{m}})$ and $h(x)$ has no roots in $\mu_m$, then $\psi_f^{-\infty}$ is the empty set. Let $\omega'$ be the greatest divisor of $q-1$ that is relatively prime with $n$. In the following theorem, we determine the graph $\mathcal{G}^{(1)}_{f / \Fq}$ under the hypothesis that $f(x)$ is $m$-nice.

\begin{theorem}\label{item601}  Assume that $f(x)=x^n h(x^{\frac{q-1}{m}})$ is $m$-nice over $\Fq$ and let $S_1,\dots,S_t\subset\mu_m$ be sets such that $\mu_m\backslash \psi_f^{-\infty}(0)=S_1\cup\dots\cup S_t$ and $\mathcal{G}\left(\psi_f / S_i \right)=\cyc(k_i)$. For each $i=1,\dots,t$, let $\xi_i\in S_i$ and $\ell_i,r_i$ be integers such that $\xi_i=\alpha^{\frac{q-1}{m} r_i}$ and $\alpha^{\ell_i}=\prod_{j=0}^{k_i-1}h\big(\psi_f^{(j)}(\xi_i)\big)^{n^{k_i-j-1}}$. Then
	$$\mathcal{G}^{(1)}_{f / \Fq}=\bigoplus_{\substack{ i=1,\dots,t \\ u\mid \ord_{\scalebox{0.5}{$w'(n^{k_i}-1)$}}(n^{\scalebox{0.4}{$k_i$}})\\}}\left(\frac{\sum_{d\mid u}\mu(\tfrac{u}{d})\tau_i(d)}{u} \times \cyc\left(k_i u, \Ta_{\gcd_{n}(\nu)}\right)\right), $$
	where {\small $$\tau_i(d):=\begin{cases}
			\gcd\big(\tfrac{q-1}{m},n^{dk_i}-1\big),&\text{ if }\gcd\big(q-1,(n^{dk_i}-1)m\big)\mid \big(l_i\big(\tfrac{n^{dk_i}-1}{n^{k_i}-1}\big)+r_i(n^{dk_i}-1)\big);\\
			0,&\text{ otherwise.}\\
		\end{cases}$$ }
\end{theorem}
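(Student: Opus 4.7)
The plan is to lift the cycle structure of $\psi_f$ on $\mu_m$ to that of $f$ on $\Fq^{*}$ using the commuting identity $\psi_f\circ x^{(q-1)/m}=x^{(q-1)/m}\circ f$. Writing $N:=(q-1)/m$ and $\pi\colon\Fq^{*}\to\mu_m$, $\pi(\alpha^t)=\alpha^{t(q-1)/m}$, every fiber of $\pi$ has size $N$, and the preimage $\pi^{-1}(S_i)$ of a $\psi_f$-cycle $S_i$ is an $f$-invariant union of $k_i$ fibers cyclically permuted by $f$. It therefore suffices to analyze $f^{(k_i)}$ restricted to the single fiber $\pi^{-1}(\xi_i)$, and then multiply cycle lengths by $k_i$.

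A direct induction on $k_i$ gives, for every $\beta\in\pi^{-1}(\xi_i)$,
$$f^{(k_i)}(\beta)=\beta^{n^{k_i}}\prod_{j=0}^{k_i-1}h\bigl(\psi_f^{(j)}(\xi_i)\bigr)^{n^{k_i-1-j}}=\beta^{n^{k_i}}\alpha^{\ell_i}.$$
Parametrize $\pi^{-1}(\xi_i)$ by $s\in\Z_N$ via $\beta=\alpha^{r_i+ms}$. Since $f^{(k_i)}$ preserves $\pi^{-1}(\xi_i)$, one deduces $m\mid(n^{k_i}-1)r_i+\ell_i$, and setting $c_i:=\bigl((n^{k_i}-1)r_i+\ell_i\bigr)/m$ conjugates $f^{(k_i)}|_{\pi^{-1}(\xi_i)}$ to the affine map
$$T\colon\Z_N\to\Z_N,\qquad T(s)=n^{k_i}s+c_i\pmod{N}.$$

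Next I would count $T$-cycles. Iterating yields $T^{(u)}(s)=n^{uk_i}s+c_i\cdot\tfrac{n^{uk_i}-1}{n^{k_i}-1}$, so $T^{(u)}(s)=s$ becomes the linear congruence $(n^{uk_i}-1)s\equiv -c_i\cdot\tfrac{n^{uk_i}-1}{n^{k_i}-1}\pmod{N}$, which has exactly $\gcd(n^{uk_i}-1,N)=\gcd\bigl((q-1)/m,n^{uk_i}-1\bigr)$ solutions when $\gcd(n^{uk_i}-1,N)$ divides the right-hand side, and none otherwise. Clearing the factor $m$ (using $mN=q-1$) rewrites this divisibility exactly as the condition appearing in the definition of $\tau_i(d)$ with $d=u$, so $\tau_i(u)=|\mathrm{Fix}(T^{(u)})|$. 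M\"obius inversion then shows that the number of $T$-cycles of length exactly $u$ on a single fiber equals $u^{-1}\sum_{d\mid u}\mu(u/d)\tau_i(d)$, and each such $T$-cycle corresponds, via the cyclic action of $f$ on the $k_i$ fibers of $\pi^{-1}(S_i)$, to a single $f$-cycle of length $k_iu$. The restriction to $u\mid\ord_{\omega'(n^{k_i}-1)}(n^{k_i})$ is justified as a universal upper bound on $T$-cycle lengths; contributions outside this range vanish. Finally, Lemmas~\ref{item610} and~\ref{item613} identify the rooted tree hanging from every cyclic vertex as the elementary tree $\Ta_{\gcd_n(\nu)}$, since the $n$-part of $N$ produces exactly this nilpotent backward-orbit structure. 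Assembling the pieces produces $u^{-1}\sum_{d\mid u}\mu(u/d)\tau_i(d)$ copies of $\cyc(k_iu,\Ta_{\gcd_n(\nu)})$ for each pair $(i,u)$, and summing over all $i=1,\dots,t$ and admissible $u$ yields the stated formula.

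The main obstacle is the bookkeeping needed to convert the elementary criterion ``$\gcd(n^{uk_i}-1,N)$ divides the affine constant $c_i\cdot\tfrac{n^{uk_i}-1}{n^{k_i}-1}$'' into the exact form of $\tau_i(d)$ stated in the theorem; this requires careful manipulation using $mN=q-1$ and the relation $\omega'=\omega\cdot\gcd(m,\omega')$ between the $n$-coprime parts of $q-1$ and of $N$, and a precise verification that the cycle-length bound $\ord_{\omega'(n^{k_i}-1)}(n^{k_i})$ genuinely captures every nonzero cycle length of $T$. The conjugation to the affine model, the M\"obius inversion, and the invocation of the elementary-tree lemmas are then essentially routine.
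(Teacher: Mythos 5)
Your proposal is correct and follows essentially the same route as the paper: reduce to the fibers of $x\mapsto x^{\frac{q-1}{m}}$ over a cycle of $\psi_f$, use the iterate formula for $f^{(k_i)}$ to obtain a linear congruence modulo $q-1$ whose solution count is $\tau_i(d)$, apply M\"obius inversion to extract exact cycle lengths, and invoke Lemmas~\ref{item610} and~\ref{item613} for the hanging trees. Your packaging of the fiber computation as conjugation to the affine map $s\mapsto n^{k_i}s+c_i$ on $\Z_{(q-1)/m}$ is a cosmetic (and arguably cleaner) reformulation of the congruence \eqref{item619} that the paper writes down directly, and your divisibility condition matches $\tau_i(d)$ after multiplying through by $m$.
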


We present now an example.

\begin{example}  Let $g(x)\in\F_{181}[x]$ be defined as in Example~\ref{item618} and let notation be as in Theorem~\ref{item601}. Our goal is to apply Theorem~\ref{item601} for the polynomial $g(x)$.  From Example~\ref{item618}, we can choose $S_1=\{125\}$ and $S_2=\{1,135\}$, so that $k_1=1$ and $k_2=2$. Furthermore, $w'=4$, $\ord_{\scalebox{0.7}{$4(15-1)$}}(15)=2$, $\ord_{\scalebox{0.7}{$4(15^2-1)$}}(15^{\scalebox{0.7}{$2$}})=4$, $\ell_1=18$, $\ell_2=75$, $r_1=3$ and $r_2=0$, which implies that $$\tau_1(1)=2,\ \tau_1(2)=4,\ \tau_2(1)=0,\ \tau_2(2)=0\text{ and }\tau_2(4)=4.$$ Therefore, Theorem~\ref{item601} states that 
	$$\mathcal{G}^{(1)}_{g / \F_{181}}=2\times \cyc(1,\Ta_{(3,3,1)})\oplus \cyc(2,\Ta_{(3,3,1)})\oplus\cyc(8,\Ta_{(3,3,1)}).$$
	Figure~\ref{item621} shows this functional graph.
	\begin{figure}[H]
		\centering
		\includegraphics[keepaspectratio,width=0.5\linewidth]{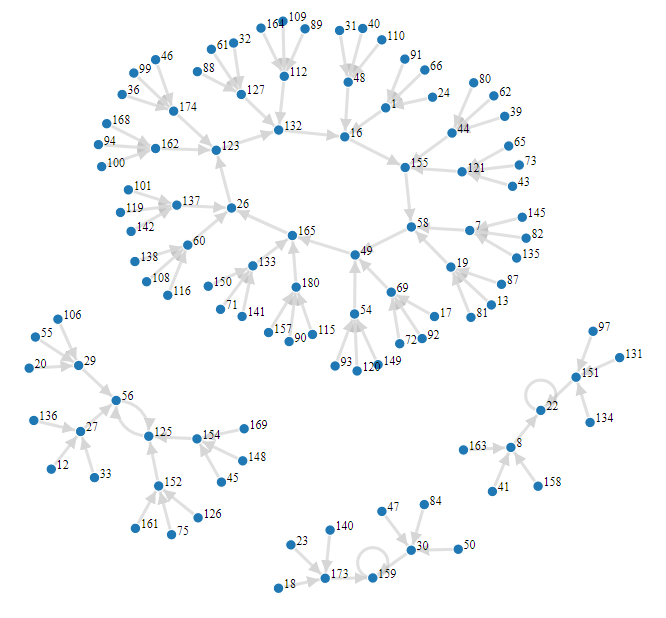}
		\caption{\centering The conneted components of $\mathcal{G}(g/\F_{181})$ that does not contain the element $0\in\F_{181}$.}
		\label{item621}
	\end{figure}
\end{example}
We observe that in the case where $m=1$ and $h(x)=ax$, then $f(x)=a x^n$ for all $x\in\Fq$. Furthermore, $f(x)=ax^n$ is $1$-nice, which implies that Theorems~\ref{item601} holds. In this case, Theorems~\ref{item601} reads
	$$\mathcal{G}^{(1)}_{f / \Fq}=\bigoplus_{u\mid \ord_{\scalebox{0.5}{$w'(n-1)$}}(n)}\left(\frac{\sum_{d\mid u}\mu(\tfrac{u}{d})\tau(d)}{u} \times \cyc\left(u, \Ta_{\gcd_{n}(\nu)}\right)\right),$$
	where $\alpha^\ell=a$ and
	{\small $$\tau(d):=\begin{cases}
			\gcd\big(q-1,n^{d}-1\big),&\text{ if }\gcd\big(q-1,(n^{d}-1)\big)\mid l\big(\tfrac{n^{d}-1}{n-1}\big);\\
			0,&\text{ otherwise.}\\
		\end{cases}$$ }
This expression generalizes some results obtained in \cite{chou2004cycle,qureshi2015redei,qureshi2021functional}. 

O note that, in particular, Theorems~\ref{item603} and \ref{item601} gives the number of connected components, the length of the cycles and the number of fixed points of $m$-nice polynomials. In the case where this condition is not satisfied, the functional graph of the polynomial is more chaotic, what makes it difficult to use the same approach used here. In the following example we present a polynomial that is not nice and its associated functional graph.

\begin{example}
	Let $g(x)=x^{6}h(x^{24})\in\F_{97}[x]$, where $h(x)=x-1$. Then $\psi_g(x)=x^{6}(x-1)^{24}$. One can show that $\mu_5=\{1,22,75,96\}$. Furthermore, 
	$$\psi_g(22)=\psi_g(75)=22,\ \psi_g(96)=96\text{ and }\psi_g(1)=0.$$
	Therefore, $g(x)$ is not $4$-nice over $\F_{97}$. In this case, the trees attached to cyclic vertices of $\mathcal{G}(g / \F_{97})$ has no regularity. Figure~\ref{item656} shows this functional graph.
	\begin{figure}[H]
		\centering
		\includegraphics[keepaspectratio,width=0.7\linewidth]{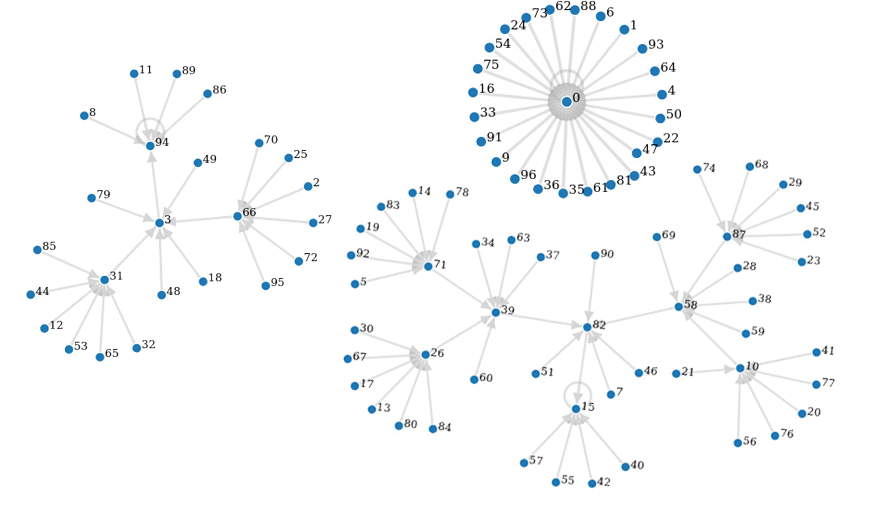}
		\caption[The functional graph $\mathcal{G}(g / \F_{97})$]{\centering The functional graph $\mathcal{G}(g / \F_{97})$}
		\label{item656}
	\end{figure}
\end{example}

\section{Preparation}\label{item623}

In this section, we provide preliminary notations and results that will be important in the proof of our main results. Let $b$ and $c$ be vertices in a directed graph. We say that a vertex $c$ is $k$-distant from a vertex $b$ if $c$ is at a distance of $k$ from $b$.  If a vertex $b$ is reachable from $c$, then $c$ is a predecessor of $b$ and $b$ is a successor of $c$. The following notions and results are the tools we need to prove Theorems~\ref{item603} and~\ref{item601}.

\begin{definition}
	Let $\ga$ be a directed graph, $V=\left(v_{1}, v_{2}, \ldots, v_{D}\right)$ be a non increasing sequence of positive integers and let $v_i=v_D$ for all $i\ge D$. We say that $\ga$ is $V$-regular if for each positive integer $k$, the number of $k$-distant predecessors of a vertex of $\ga$ is either $0$ or $v_1\cdots v_k$.
\end{definition}

\begin{lemma}\label{item610}
	Let $V=\left(v_{1}, v_{2}, \ldots, v_{D}\right)$ be a non increasing sequence of positive integers. If $\Ta$ is a $V$-regular rooted tree with depth $k$, then $\Ta$ is isomorphic to $\Ta_V^k$. 
\end{lemma}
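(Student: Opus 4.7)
The plan is to prove the lemma by induction on the depth $k$ of the tree $\Ta$. The base case $k = 0$ is immediate, since a rooted tree of depth zero is a single vertex, which equals $\Ta_V^0 = \bullet$ by definition.

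For the inductive step, I first observe that the subtree rooted at any child of the root of a $V$-regular tree is itself $V$-regular: the predecessors of any vertex within the subtree coincide with its predecessors in $\Ta$. Since $\Ta$ has depth $k \ge 1$, the $V$-regularity condition forces the root to have exactly $v_1$ children. The subtree rooted at each child $c$ has some depth $d_c \le k-1$, and by the inductive hypothesis it is isomorphic to $\Ta_V^{d_c}$.

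The remaining task is to show that the multiset of depths $\{d_c\}$ agrees with what the recursive definition of $\Ta_V^k$ prescribes. I will accomplish this by double-counting, for each $j \in \{0, 1, \ldots, k-1\}$, the number of $(j+1)$-distant predecessors of the root. On the one hand, $V$-regularity forces this count to be $v_1 \cdots v_{j+1}$, since the inequality $j + 1 \le k$ guarantees it is nonzero. On the other hand, each child $c$ contributes $v_1 \cdots v_j$ such predecessors when $d_c \ge j$ and zero otherwise. Letting $n_j^{\ge}$ denote the number of children with $d_c \ge j$, equating the two expressions yields $n_j^{\ge} = v_{j+1}$ for $0 \le j \le k-1$. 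Consequently, the number of children with depth exactly $j$ equals $v_{j+1} - v_{j+2}$ for $0 \le j \le k-2$ and $v_k$ for $j = k-1$, matching the recursive definition of $\Ta_V^k$ term by term.

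The main (minor) obstacle is confirming that the counting applies uniformly in the regime $k \ge D$, where the recursion for $\Ta_V^k$ is stated separately. This is absorbed by the convention $v_i = v_D$ for $i \ge D$: the terms $v_i - v_{i+1}$ vanish whenever $i \ge D$, so the distribution forced by the predecessor count coincides with the one produced by the recursive formula without any modification.
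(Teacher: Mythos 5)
Your proof is correct, and while it is an induction on the depth just like the paper's, the decomposition is genuinely different. The paper works bottom-up: it deletes the vertices of maximal depth $k+1$, applies the inductive hypothesis to the truncated tree $T$ and to the depth-$k$ subtrees hanging from the depth-one vertices that reach the bottom layer, and then argues that reinstating that layer turns $\Ta_V^k$ into $\Ta_V^{k+1}$. You work top-down: you split $\Ta$ at the root into the subtrees rooted at its children (correctly noting that each inherits $V$-regularity because predecessor sets are unchanged), apply the inductive hypothesis to each, and recover the multiset of child-depths by double-counting the $(j+1)$-distant predecessors of the root. Your route has a real advantage in rigor: the depth distribution of the children falls out uniformly, for every level $j$, from a single counting identity $n_j^{\ge}\cdot v_1\cdots v_j = v_1\cdots v_{j+1}$, whereas the paper's argument silently relies on the $j=k$ instance of that very count (to justify that exactly $v_{k+1}$ depth-one vertices have a descendant at depth $k+1$) and must additionally verify that replacing the subtrees $T_{z_i}\cong\Ta_V^{k-1}$ by $\Ta_V^k$ inside $\Ta_V^k$ is compatible with the recursive definition of $\Ta_V^{k+1}$. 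The only cost on your side is that the induction is a course-of-values induction, since the children's subtrees may have any depth up to $k-1$; this is harmless, and your closing observation that the convention $v_i=v_D$ for $i\ge D$ makes the two regimes of the recursive definition agree is exactly the right remark to dispose of the case $k\ge D$.
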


\begin{proof}
	For $i\ge D+1$, let $v_i=v_D$. We proceed by induction on $k$. The case $k=0$ follows directly. Suppose that the result is true for an integer $k\ge 1$ and $\Ta$ is a $V$-regular rooted tree with depth $k+1$. Let $T$ be the rooted tree obtained from $\Ta$ by deleting the vertices with depth $k+1$. By induction hypothesis, $T$ is isomorphic to $\Ta_V^k$.
	
	 Let $z_1,\dots,z_{v_{k+1}}$ be the vertices of $\Ta$ with depth $1$ that have at least one descendant with depth $k+1$ and let $z_1',\dots,z_{v_{k+1}}'$ be the equivalent vertices in $T$. For $i=1,\dots,v_{k+1}$, let $\Ta_{z_i}$ be the rooted tree obtained from $\Ta$ containing all descendants of $z_i$ and let $T_{z_i}$ be the rooted tree obtained from $T$ containing all descendants of $z_i'$. Since $\Ta$ is $V$-regular, each $\Ta_{z_i}$ is isomorphic to $\Ta_V^k$. Therefore, $\Ta$ can be recovered from $T$ by replacing each $T_{z_i}$ by $\Ta_V^k$. The tree obtained from these steps is isomorphic to the rooted tree $\Ta_V^{k+1}$, which completes the proof of our assertion.
\end{proof}

\begin{definition}
	For a vertex $b$ of a directed graph $\mathcal G$, the graph $R_b(\ga)$ is the subgraph of $\ga$ containing all predecessors of $b$ (including $b$). 
\end{definition}

\begin{lemma}\label{item613}
	Let $V=\left(v_{1}, v_{2}, \ldots, v_{D}\right)$ be a non increasing sequence of positive integers such that $v_D=1$, let $\ga$ be a $V$-regular directed graph with depth $\ge D$, $b$ be a vertex of $\ga$ and $c$ a child of $b$. Let $G$ be the graph obtained from $R_b(\ga)$ by removing $c$ and its predecessors. If $G$ is a tree, then $G$ is isomorphic to $\Ta_{V}$.
\end{lemma}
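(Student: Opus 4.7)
My plan is to combine Lemma~\ref{item610} with a counting argument over the children of $b$ in $R_b(\mathcal{G})$, organised by the depth of their subtrees.

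First, I would observe that every predecessor in $\mathcal{G}$ of a vertex $w\in R_b(\mathcal{G})$ must reach $b$ through $w$ and so already belongs to $R_b(\mathcal{G})$; the same remark applies to $R_c(\mathcal{G})$ and to each subtree $R_{b'}(\mathcal{G})$ at a child $b'$ of $b$. Consequently, all of these subgraphs inherit the $V$-regular property from $\mathcal{G}$. Since $G$ is a tree by hypothesis and $R_b(\mathcal{G})$ is joined to $R_c(\mathcal{G})$ only through the edge $c\to b$, both $R_c(\mathcal{G})$ and $R_b(\mathcal{G})$ are trees, and each $R_{b'}(\mathcal{G})$ for a child $b'\ne c$ of $b$ is a subtree of $G$ and therefore also a tree. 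Lemma~\ref{item610} then identifies each of them with an elementary tree: $R_{b'}(\mathcal{G})\cong\mathcal{T}_V^{k_{b'}}$ and $R_c(\mathcal{G})\cong\mathcal{T}_V^{k_c}$, where $k_{b'}$ and $k_c$ denote the respective depths.

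Next, I would use the $V$-regular condition at $b$ itself to count how many children carry subtrees of each depth. Writing $N_{\ge j}$ for the number of children of $b$ in $R_b(\mathcal{G})$ whose subtree has depth at least $j$, for every $k\ge 1$ one has
\[
\#\{k\text{-distant predecessors of }b\}=N_{\ge k-1}\cdot v_1 v_2\cdots v_{k-1}.
\]
Since this count must equal $0$ or $v_1\cdots v_k$ by $V$-regularity, it follows that $N_{\ge k-1}\in\{0,v_k\}$. Letting $K$ denote the depth of $R_b(\mathcal{G})$, so that $N_{\ge K}=0$ and $N_{\ge K-1}=v_K$, the children of $b$ in $R_b(\mathcal{G})$ split into exactly $v_{j+1}-v_{j+2}$ roots of $\mathcal{T}_V^{j}$ for $0\le j\le K-2$, together with $v_K$ roots of $\mathcal{T}_V^{K-1}$.

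The main obstacle, in my view, is pinning down $K$ and the depth of $c$. The plan is to exploit the hypothesis $\mathrm{depth}(\mathcal{G})\ge D$ together with the stabilisation $v_i=1$ for $i\ge D$: the depth hypothesis forces $K\ge D-1$, and for any $K\ge D$ the equality $v_K=1$ guarantees that $b$ has a unique child of maximum depth $K-1$, which the tree requirement on $G$ identifies with $c$. Using $v_i-v_{i+1}=0$ for $i\ge D$, removing this child collapses the $V$-regular tally to the pattern associated with $K=D-1$, having one fewer root of $\mathcal{T}_V^{D-2}$. In the extremal case $K=D-1$ there are instead $v_{D-1}$ children of maximum depth $D-2$ and $c$ is one of them.

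Finally, the remaining children of $b$ in $G$ amount to $v_{D-1}-1$ copies of $\mathcal{T}_V^{D-2}$ together with $v_{j+1}-v_{j+2}$ copies of $\mathcal{T}_V^{j}$ for $0\le j\le D-3$. Recognising this as
\[
\left\langle(v_{D-1}-1)\,\mathcal{T}_V^{D-2}\oplus\bigoplus_{i=1}^{D-2}(v_i-v_{i+1})\,\mathcal{T}_V^{i-1}\right\rangle,
\]
which, using $v_D=1$, is precisely the root decomposition of $\mathcal{T}_V$ given by its defining recursion, yields $G\cong\mathcal{T}_V$.
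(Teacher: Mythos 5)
Your overall skeleton --- decompose $R_b(\ga)$ at $b$ into the branches above its children, use $V$-regularity to show the number of children whose branch reaches depth $\ge j$ is $0$ or $v_{j+1}$, identify each branch with an elementary tree via Lemma~\ref{item610}, delete the branch of $c$, and match the result against the defining recursion of $\Ta_V$ --- is the right one, and it is more than the paper offers (the paper only says ``similar to Lemma~\ref{item610}''). But the proof breaks at the two sentences where you locate $c$ inside this decomposition. First, ``the depth hypothesis forces $K\ge D-1$'' is a non sequitur: the hypothesis bounds the depth of $\ga$, not of $R_b(\ga)$, and nothing prevents $R_b(\ga)$ from being shallow. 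Second, and more seriously, ``the tree requirement on $G$ identifies [the unique deepest child] with $c$'' is not justified, and cannot be justified from the stated hypotheses. Take $V=(2,1)$ and $\ga=\Ta_V^2$ (root with two children, one a root of $\Ta_V^1$ and one a leaf); this is $V$-regular of depth $D=2$. With $b$ the root and $c$ the leaf child, $G$ is a tree of depth $2$, while $\Ta_V$ has depth $1$. So removing a shallow child can perfectly well leave a tree, your identification of $c$ fails, and indeed the lemma itself fails for this choice of $(b,c)$.

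What actually closes the gap is information the lemma does not state but the application supplies: $b$ is a vertex on a cycle of a functional graph and $c$ is its cyclic child. Then (i) $b$ has $k$-distant predecessors for every $k$ (walk backwards around the cycle), so $N_{\ge j}=v_{j+1}$ for all $j$, with the branch through $c$ accounting for one unit at every depth; and (ii) the branch through any other child is a tree, whereas the branch through $c$ contains the cycle, so removing any branch other than $c$'s would leave the cycle inside $G$ --- it is only this that lets ``$G$ is a tree'' pin down $c$. With (i) and (ii) the non-$c$ children contribute $v_{j+1}-v_{j+2}$ branches of depth $j$ for $j\le D-3$ and $v_{D-1}-1$ of depth $D-2$, which is your final display and equals $\Ta_V$ since $v_D=1$. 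You should either import these facts as hypotheses or prove them; as written the key step is asserted, not proved. A smaller but related error: in this setting $R_b(\ga)$ and $R_c(\ga)$ are \emph{not} trees (both contain the cycle), so Lemma~\ref{item610} may only be applied to the branches at children $b'\neq c$, not to $R_c(\ga)$ as you claim.
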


\begin{proof}
	It follows similarly to the proof of Lemma~\ref{item610}.
\end{proof}

If $f(x)=x^n h(x^{\frac{q-1}{m}})$ is a polynomial with index $m$ and $k$ is a positive integer, then one can readily prove that
\begin{equation}\label{item606}
	f^{(k)}(x)=x^{n^k} \prod_{i=0}^{k-1} h\left(\psi_f^{(i)}\left(x^{\frac{q-1}{m}}\right)\right)^{n^{k-i-1}}.
\end{equation}
This formula will be important in the proofs of the main results.

\begin{lemma}\label{item604}
	Let $k$ be a positive integer and $b\in\Fq^*$. Assume that $f(x)=x^n h(x^{\frac{q-1}{m}})$ is $m$-nice. If $x\in\Fq$ is a solution of the equation $f^{(k)}(x)=b$, then $x^{\frac{q-1}{m}}=\psi_f^{(-k)}\big(b^{\frac{q-1}{m}}\big)$.
\end{lemma}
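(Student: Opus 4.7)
The plan is to reduce the statement to the commutation identity $(\psi_f\circ x^{(q-1)/m})(a)=(x^{(q-1)/m}\circ f)(a)$ already noted after Definition~\ref{item602}, and then use the $m$-niceness hypothesis to invert $\psi_f$ uniquely $k$ times. Iterating the commutation identity (a trivial induction on $k$) gives $\psi_f^{(k)}(a^{(q-1)/m})=f^{(k)}(a)^{(q-1)/m}$ for every $a\in\Fq$. Setting $a=x$ and using $f^{(k)}(x)=b$ yields
\[
\psi_f^{(k)}\bigl(x^{(q-1)/m}\bigr)=b^{(q-1)/m}.
\]
So the content of the lemma is that this identity can be rewritten with $\psi_f^{(-k)}$ on the other side, i.e.\ that $\psi_f^{(-k)}$ is unambiguously defined at $b^{(q-1)/m}$ and returns $x^{(q-1)/m}$.

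The first step I would take is to check that the whole chain of iterates stays in the injectivity domain of $\psi_f$. Set $y_i:=\psi_f^{(i)}\bigl(x^{(q-1)/m}\bigr)$ for $0\le i\le k$, so $y_k=b^{(q-1)/m}$. Since $b\neq 0$ we have $y_k\neq 0$; and since $\psi_f(0)=0$, the equality $y_k=\psi_f^{(k-i)}(y_i)\ne 0$ forces $y_i\neq 0$ for every $0\le i<k$. In particular $x\neq 0$ (otherwise $y_0=0$), so $x^{(q-1)/m}\in\mu_m$, and each $y_i$ belongs to $\mu_m\setminus\psi_f^{-1}(0)$ for $i<k$, because $\psi_f(y_i)=y_{i+1}\ne 0$.

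The second step is to apply Definition~\ref{item602}: by $m$-niceness, $\psi_f$ is injective on $\mu_m\setminus\psi_f^{-1}(0)$, so for each $i=0,\dots,k-1$ the element $y_i$ is the unique preimage of $y_{i+1}$ under $\psi_f$ lying in $\mu_m$. Consequently $\psi_f^{(-1)}(y_{i+1})=y_i$ is well defined, and iterating these $k$ single-step inversions yields
\[
\psi_f^{(-k)}\bigl(b^{(q-1)/m}\bigr)=\psi_f^{(-k)}(y_k)=y_0=x^{(q-1)/m},
\]
which is the claim.

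The only genuinely delicate point is the well-definedness of $\psi_f^{(-k)}$ at $b^{(q-1)/m}$; everything else is formal manipulation of the commutation identity. That delicate point is exactly what the hypothesis $b\neq 0$ plus $m$-niceness is designed to handle, and the argument above shows it is automatic once one checks that no intermediate $y_i$ vanishes. Hence no further input is required beyond the commutation identity and Definition~\ref{item602}.
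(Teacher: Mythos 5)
Your proposal is correct and follows essentially the same route as the paper: both rest on the commutation identity $f(x)^{\frac{q-1}{m}}=\psi_f\big(x^{\frac{q-1}{m}}\big)$ followed by $k$ applications of the injectivity of $\psi_f$ on $\mu_m\setminus\psi_f^{-1}(0)$, the paper packaging this as an induction on $k$ while you unroll the chain explicitly. Your extra check that no intermediate iterate $y_i$ vanishes (so that $\psi_f^{(-k)}$ is genuinely well defined at $b^{\frac{q-1}{m}}$) is a point the paper leaves implicit, but it is the same argument.
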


\begin{proof}
	We proceed by induction on $k$. Let $k=1$ and let $x\in\Fq$ be a solution of the equation $f(x)=b$. Then $f(x)^{\frac{q-1}{m}}=x^{\frac{q-1}{m}n} h(x^{\frac{q-1}{m}})^{\frac{q-1}{m}}=b^{\frac{q-1}{m}}$. Since $f(x)$ is $m$-nice, it follows that $x^{\frac{q-1}{m}}=\psi_f^{(-1)}\big(b^{\frac{q-1}{m}}\big)$. Suppose that the result follows for an integer $k\ge 1$ and let $x\in\Fq$ be a solution of the equation $f^{(k+1)}(x)=b$. By induction hypothesis, $f(x)^{\frac{q-1}{m}}=\psi_f^{(-k)}\big(b^{\frac{q-1}{m}}\big)$. Therefore, $x^{\frac{q-1}{m}n} h(x^{\frac{q-1}{m}})^{\frac{q-1}{m}}=\psi_f^{(-k)}\big(b^{\frac{q-1}{m}}\big)$, which implies that $x^{\frac{q-1}{m}}=\psi_f^{(-(k+1))}\big(b^{\frac{q-1}{m}}\big)$, since $f(x)$ is $m$-nice.
\end{proof}

\begin{proposition}\label{item609}
	Let $a\in\Fq^*$. If $f(x)$ is $m$-nice, then $R_a\left(\mathcal{G}\left(f / \Fq\right)\right)$ is $\gcd_{n}(\nu)$-regular.
\end{proposition}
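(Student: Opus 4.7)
The plan is to fix $b\in R_a(\mathcal{G}(f/\Fq))$ and a positive integer $k$, and show that the number of elements $x\in\Fq$ with $f^{(k)}(x)=b$ is either $0$ or $v_1\cdots v_k=\gcd(n^k,\nu)$, where we write $\gcd_n(\nu)=(v_1,\ldots,v_s)$; this is exactly the $\gcd_n(\nu)$-regularity requirement. Note first that since $0$ is a fixed point of $f$ and $a\in\Fq^*$, no vertex of $R_a(\mathcal{G}(f/\Fq))$ equals $0$, so in particular $b\neq 0$; also, every $k$-distant predecessor of $b$ in $\mathcal{G}(f/\Fq)$ is automatically a predecessor of $a$ and therefore lies in $R_a$, so the two counts agree.

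The key step is Lemma~\ref{item604}: any solution $x$ of $f^{(k)}(x)=b$ must satisfy $x^{(q-1)/m}=\psi_f^{(-k)}(b^{(q-1)/m})$, and the $m$-nice hypothesis ensures this preimage contains at most one element. If it is empty the count is $0$ and we are done; otherwise it equals $\{\xi\}$ for some $\xi\in\mu_m$. The hard part is to verify that the constant $C:=\prod_{i=0}^{k-1}h(\psi_f^{(i)}(\xi))^{n^{k-i-1}}$ is nonzero, so that formula \eqref{item606} collapses the equation $f^{(k)}(x)=b$ to $x^{n^k}\cdot C=b$. I would deduce this from $\psi_f^{(k)}(\xi)=b^{(q-1)/m}\neq 0$: all intermediate iterates $\psi_f^{(i)}(\xi)$ for $0\le i\le k$ are then nonzero, and the defining identity $\psi_f(y)=y^n h(y)^{(q-1)/m}$ forces $h(\psi_f^{(i)}(\xi))\neq 0$ for $0\le i<k$.

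Finally, I would parametrize the fiber $\{x\in\Fq^*:x^{(q-1)/m}=\xi\}$ as $x_0\cdot\mu_{(q-1)/m}$ for any fixed lift $x_0$ of $\xi$, turning the equation into $\zeta^{n^k}=b/(x_0^{n^k}C)$ in the cyclic group $\mu_{(q-1)/m}$. The endomorphism $\zeta\mapsto\zeta^{n^k}$ of $\mu_{(q-1)/m}$ has kernel of size $\gcd((q-1)/m,n^k)$, so the equation admits either $0$ or $\gcd((q-1)/m,n^k)$ solutions. Since $(q-1)/m=\nu\omega$ with $\gcd(\omega,n)=1$, this number equals $\gcd(\nu,n^k)$, which in turn equals $v_1\cdots v_k$ by the property $v_1\cdots v_j=\gcd(n^j,\nu)$ of $\gcd_n(\nu)$ noted in the paper. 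Besides the nonvanishing of $C$ discussed above, the argument is a clean counting via roots of unity, so I expect no other serious obstacle.
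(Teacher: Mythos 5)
Your proof is correct and follows essentially the same route as the paper: reduce via Lemma~\ref{item604} to counting solutions of $x^{n^k}C=b$ on the fiber $\{x:x^{\frac{q-1}{m}}=\xi\}$, and conclude that this count is $0$ or $\gcd\big(n^k,\tfrac{q-1}{m}\big)=\gcd(n^k,\nu)=v_1\cdots v_k$ (you phrase the count as the kernel of the $n^k$-power endomorphism of a cyclic group, the paper as a linear congruence in discrete logarithms, which is the same computation). If anything, you are more careful than the paper on two points it leaves implicit: the nonvanishing of $C=\prod_{i}h(\psi_f^{(i)}(\xi))^{n^{k-i-1}}$ (needed before dividing by it), and the fact that $k$-distant predecessors of $b$ in $\mathcal{G}(f/\Fq)$ automatically lie in $R_a(\mathcal{G}(f/\Fq))$.
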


\begin{proof}
	Let $b$ be a vertex of $R_a\left(\mathcal{G}\left(f / \Fq\right)\right)$ and $k$ be a positive integer. The number of $k$-distant predecessors of $b$ is equal to the number of solutions of the equation
	 \begin{equation}\label{item605}
	 	f^{(k)}(x)=b
	 \end{equation}
	  over $\Fq$. By Lemma~\ref{item604}, a solution $x\in\Fq$ of Equation~\eqref{item605} must satisfy the relation $x^{\frac{q-1}{m}}=\psi_f^{(-k)}\big(b^{\frac{q-1}{m}}\big)=:\xi\in\mu_m$. Let $\alpha$ be a primitive element of $\Fq$ and let $t$ be an integer such that $\xi=\alpha^{\frac{q-1}{m}t}$. Then the equality $x^{\frac{q-1}{m}}=\alpha^{\frac{q-1}{m}t}$ implies $x=\alpha^{t+m\ell}$ for some $l=1,\dots,\tfrac{q-1}{m}$. Now, Equations~\eqref{item606} and \eqref{item605} states that
	  \begin{equation}\label{item607}
	  	\alpha^{(t+m\ell)n^k}c=b,
	  \end{equation}
	  where $c=\prod_{i=0}^{k-1} h\left(\psi_f^{(i)}\left(\xi\right)\right)^{n^{k-i-1}}$ and $l=1,\dots,\tfrac{q-1}{m}$. In order to complete the proof, we will prove the following statement.
	   
	  \textbf{Claim.} The number of integers $\ell\in\{1,\dots,\tfrac{q-1}{m}\}$ satisfying Equation~\eqref{item607} is equal to either $0$ or $\gcd\big(n^k,\tfrac{q-1}{m}\big)$.
	  
	  \textit{Proof of the claim.} Let $u$ be an integer such that $b/c=\alpha^u$. We want to compute the number of integers $\ell=1,\dots,\tfrac{q-1}{m}$ such that $\alpha^{(t+m\ell)n^k}=\alpha^u$, that is
	  \begin{equation}\label{item608}
	  (t+m\ell)n^k\equiv u\pmod{q-1}.
	  \end{equation}
	  Assume that this equation has at least one solution. Then $\gcd(m n^k,q-1)$ must divide $u-tn^k$. In this case, Equation~\eqref{item608} becomes
	  $$\tfrac{n^k}{\gcd(n^k,s)} \ell\equiv \tfrac{u-tn^k}{\gcd(mn^k,q-1)}\pmod{\tfrac{q-1}{m\gcd(n^k,s)}},$$
	  where $s=\tfrac{q-1}{m}$. Now, since $\tfrac{n^k}{\gcd(n^k,s)}$ is relatively prime to $\tfrac{q-1}{m\gcd(n^k,s)}$, there exists exactly one solution $\ell$ to the above equation in the interval $\big[1,\tfrac{q-1}{m\gcd(n^k,s)}\big]$. Therefore, Equation~\eqref{item608} has $\gcd(n^k,s)$ solutions, which proves our claim.
	  
	  By the Claim, the number of $k$-distant predecessors of $b$ is either $0$ or $\gcd(n^k,s)$, which is the $k$-th entry of $\gcd_{n}(\nu)$. Since $b\in\Fq^*$ and $k$ were taken arbitrarily, the proof of our assertion is complete.
	  
\end{proof}

 We recall a classic result from Number Theory that will be used in the proof of Theorem~\ref{item601}.

\begin{theorem}\cite[M{\"o}bius inversion formula]{ireland1982classical}\label{item615} Let $G(u)=\sum_{d\mid u} g(d)$. Then $g(u)=\sum_{d\mid u}\mu(u/d)G(d)$.

\end{theorem}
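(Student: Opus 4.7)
The plan is to prove the M\"obius inversion formula by substituting the definition of $G$ into the right-hand side, interchanging the order of summation, and invoking the classical auxiliary identity
$$\sum_{d \mid n} \mu(d) = \begin{cases} 1 & \text{if } n = 1, \\ 0 & \text{if } n > 1. \end{cases}$$
I would first verify this auxiliary identity. The case $n=1$ is immediate since $\mu(1)=1$. For $n > 1$, writing $n = p_1^{a_1}\cdots p_r^{a_r}$ and using that $\mu$ vanishes on non-square-free integers, only the square-free divisors $d = \prod_{i \in S} p_i$ with $S \subseteq \{1,\dots,r\}$ contribute, each with $\mu(d) = (-1)^{|S|}$; hence the sum collapses to the binomial expansion $(1-1)^r = 0$.

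Next I would carry out the main computation. Substituting $G(d) = \sum_{e \mid d} g(e)$ into the right-hand side and interchanging the order of summation, using that the conditions $e \mid d$ and $d \mid u$ together are equivalent to $e \mid u$ together with $d$ being a multiple of $e$ dividing $u$, I obtain
$$\sum_{d \mid u} \mu(u/d)\, G(d) \;=\; \sum_{d \mid u} \mu(u/d) \sum_{e \mid d} g(e) \;=\; \sum_{e \mid u} g(e) \sum_{\substack{d \mid u \\ e \mid d}} \mu(u/d).$$
The change of variable $d = e k$ parametrizes the inner divisors by $k \mid u/e$, with $u/d = (u/e)/k$, so the inner sum becomes $\sum_{k \mid u/e} \mu\bigl((u/e)/k\bigr)$. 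Reindexing by $k' = (u/e)/k$, which still ranges over all divisors of $u/e$, this equals $\sum_{k' \mid u/e} \mu(k')$. By the auxiliary identity, this is $1$ when $u/e = 1$ and $0$ otherwise, so only the term $e = u$ survives, yielding exactly $g(u)$, as required.

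The only nontrivial ingredient, and hence the main obstacle, is the auxiliary identity $\sum_{d \mid n} \mu(d) = [n=1]$; everything else amounts to routine bookkeeping with finite sums. Since $u$ is a positive integer and all sums involved are finite, no convergence issues arise, and the entire argument is purely combinatorial.
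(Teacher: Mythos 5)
Your proof is correct and complete: the auxiliary identity $\sum_{d\mid n}\mu(d)=[n=1]$ is established properly via the binomial expansion $(1-1)^r$, and the interchange of summation together with the reindexing $d=ek$ is carried out without error. Note that the paper does not prove this statement at all --- it is quoted as a classical result with a citation to Ireland and Rosen --- so there is no proof in the paper to compare against; what you have written is the standard textbook argument and fills in exactly what the authors take for granted.
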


Now we are able to prove the main results of the paper.

\section{Functional graph of polynomial maps}\label{item624}

In this section, we provide the proof of our main results. We start by proving Theorem~\ref{item603}.

\subsection{Proof of Theorem~\ref{item603}} Let $\{0,w_1,\dots,w_{r_1(q-1)/m}\}$ denote the set of children of $0$ in $\mathcal{G}\left(f / \Fq\right)$, that consists of the solutions of the the equation
$$x^n h(x^{\frac{q-1}{m}})=0$$
over $\Fq$. For each $j=1,\dots,\tfrac{q-1}{m} r_1$, let $T_j=R_{w_j}\left(\mathcal{G}\left(f / \Fq\right)\right)$. Since the image $f(0)$ is equal to $0$, the vertex $0$ is the single vertex of the cyclic part of this component and then each $T_j$ is a tree. Therefore, $\mathcal{G}^{(0)}_{f / \Fq}=\cyc(1,T)$, where
\begin{equation}\label{item612}
	T=\left\langle\bigoplus_{j=1}^{r_1(q-1)/m} T_j\right\rangle.
\end{equation}
By Proposition~\ref{item609} and Lemma~\ref{item610}, each $T_j$ is isomorphic to $\Ta_{\gcd_{n}(\nu)}^{i_j}$, where $i_j$ is the depth of $T_j$. Therefore, we only need to determine the cardinality of each set
$$A_i=\big\{j\in\{1,\dots,r_1(q-1)/m\}:T_j\text{ is isomorphic to }\Ta_{\gcd_{n}(\nu)}^{i}\big\}.$$
In order to do that, we define the set
$$B_i=\{j\in\{1,\dots,r_1(q-1)/m\}:T_j\text{ has a vertex with depth }i\}.$$
We observe that $|A_i|=|B_i|-|B_{i+1}|$. Let $z_1,\dots,z_{r_1}$ be the elements in $\mu_m$ that are solutions of the equation $h(x)=0$. By Lemma~\ref{item604}, any vertex $x$ of $T_j$ with depth $i$ is a solution of the equation
$$x^{\frac{q-1}{m}}=\psi_f^{(-i)}\big(w_j^{\frac{q-1}{m}}\big).$$
On the other hand, since $f(x)$ is $m$-nice, any solution of the above equation must be a vertex with depth $i$ of $T_j$ for some $j=1,\dots,r_1(q-1)/m$. Therefore, we are interested in the number of solutions of the equations
\begin{equation}\label{item611}
	x^{\frac{q-1}{m}}=\psi_f^{(-i)}\big(z_\ell\big),
\end{equation}
 where $\ell=1,\dots,r_1$. Taking $x^{\frac{q-1}{m}}=\xi\in\mu_m$, Equation~\eqref{item611} becomes $$\xi=\psi_f^{(-i)}\big(z_\ell\big),$$ that has a solution (for some $\ell$) for $r_{i+1}$ distinct values $\xi\in\mu_m$. Therefore, the number of solutions of Equation~\eqref{item611} is equal to $r_{i+1}\times\tfrac{q-1}{m}$. Since $T_j$ is $\gcd_{n}(\nu)$-regular, the number of $i$-distant predecessors of $w_j$ in $T_j$ equals either $0$ or $d_i=\gcd(n^i,\nu)$. Therefore,
 $$|B_i|=\frac{r_{i+1}(q-1)}{m d_i}.$$
Now, it follows from Equation~\eqref{item612} that
$$T=\left\langle\bigoplus_{i=0}^{\infty} \left(\tfrac{(q-1)r_{i+1}}{md_{i}}-\tfrac{(q-1)r_{i+2}}{md_{i+1}}\right)\times\Ta_{\gcd_{n}(\nu)}^{i}\right\rangle.$$
Since there exist at most $m$ elements in $\mu_m$, the depth of sum of these tree is at most $m-1$, and therefore we may assume without loss of generality that $i\le m-1$, which completes the proof of our assertion. $\hfill\qed$

We are now able to prove the main result of the paper.

\subsection{Proof of Theorem~\ref{item601}}

We recall that each connected component of $\mathcal{G}^{(1)}_{f / \Fq}$ is composed by a cycle and each vertex of this cycle is a non-null element of $\Fq$ that is the root of a tree.  By Lemma~\ref{item613} and Proposition~\ref{item609}, any of such trees is isomorphic to $\Ta_{\gcd_{n}(\nu)}$. Therefore, it only remains to determine what are the cycles in $\mathcal{G}^{(1)}_{f / \Fq}$. Our goal now is to determine how many cycles there exist with length $\ell$.

By Lemma~\ref{item604}, we have that the length of a cycle is closely related to the dynamics of $\psi_f$ over $\mu_m$. Indeed, if $f^{(\ell)}(a)=a$ for a positive integer $\ell$, then $a^{\frac{q-1}{m}}=\psi_f^{(-\ell)}(a^{\frac{q-1}{m}})$, which implies that $\psi_f^{(\ell)}(a^{\frac{q-1}{m}})=a^{\frac{q-1}{m}}$, since $f$ is $m$-nice. In this case, if $a^{\frac{q-1}{m}}\in\ S_i$, then $k_i\mid \ell$. Furthermore, any vertex $b$ in the same cycle of $a$ satisfies $b^{\frac{q-1}{m}}\in S_i$. In particular, that means that the cycles whose dynamics are related to two different sets $S_i$ and $S_j$ are not connected. Therefore, we may determine each one of this cycles separately.  For a positive integer $u$ and a fixed $i\in\{1,\dots,t\}$, let 
$$A_i(d)=\{a\in\Fq:a^{\frac{q-1}{m}}=\xi_i, f^{(d k_i)}(a)=a\}$$
and
$$B_i(u)=\{a\in\Fq:a^{\frac{q-1}{m}}=\xi_i, u \text{ is the least positive integer such that } f^{(u k_i)}(a)=a\}.$$
In order to determine how many cycles (with vertices $a$ such that $a^{\frac{q-1}{m}}=\xi_i)$) there exist with length $u k_i$, we need to determine $|B_i(u)|$. We note that an element $a\in A_i(u)$ is a vertex in a cycle whose length $s$ divides $u$, then $|A_i(u)|=\sum_{d\mid u} |B_i(d)|$. The M{\"o}bius inversion formula (Theorem~\ref{item615}) implies that
\begin{equation}\label{item617}
|B_i(u)|=\sum_{d\mid u} \mu(u/d)|A_i(d)|.
\end{equation}
 We now compute the value $|A_i(d)|$. In order to do so, let $a\in A_i(d)$. Since $a^{\frac{q-1}{m}}=\xi_i$ and $f^{(d k_i)}(a)=a$, it follows that $a=\alpha^{sm+r_i}$ for some integer $s\in\{1,\dots,\tfrac{q-1}{m}\}$ and then Equation~\eqref{item606} states that
$$\big(\alpha^{sm+r_i}\big)^{n^{dk_i}} \prod_{j=0}^{d k_i-1} h\left(\psi_f^{(j)}\left(\xi\right)\right)^{n^{dk_i-j-1}}=\alpha^{sm+r_i}.$$
Since $\alpha^{\ell_i}=\prod_{j=0}^{k_i-1}h\big(\psi_f^{(j)}(\xi_i)\big)^{n^{k_i-j-1}}$, the previous equations becomes
$$	\big(\alpha^{sm+r_i}\big)^{n^{dk_i}}\alpha^{\ell_i (1+n^{k_i}+\dots+n^{(d-1)k_i})}=\alpha^{sm+r_i}.$$
Looking at the exponents in this equation and doing some algebraic manipulations, it follows that
\begin{equation}\label{item619}
	\frac{n^{d k_i}-1}{n^{k_i}-1}\left((sm+r_i)(n^{k_i}-1)+\ell_i\right)\equiv 0\pmod{q-1}.
\end{equation}
 By using the same arguments used along the proof of Proposition~\ref{item609}, one can prove that that number of solutions $s\in\{1,\dots,\tfrac{q-1}{m}\}$ of the previous equations equals
 {\small $$\tau_i(d):=\begin{cases}
		\gcd\big(\tfrac{q-1}{m},n^{dk_i}-1\big),&\text{ if }\gcd\big(q-1,(n^{dk_i}-1)m\big)\mid \big(l_i\big(\tfrac{n^{dk_i}-1}{n^{k_i}-1}\big)+r_i(n^{dk_i}-1)\big);\\
		0,&\text{ otherwise.}\\
	\end{cases}$$ }
On the other hand, each solution $s\in\{1,\dots,\tfrac{q-1}{m}\}$ of Equation~\eqref{item619} yields an element in $A_i(d)$ and, therefore,
\begin{equation}\label{item614}
	|A_i(d)|=\tau_i(d).
\end{equation}
By Equations~\eqref{item617} and~\eqref{item614}, it follows that
$$|B_i(u)|=\sum_{d\mid u} \mu(u/d)\tau_i(d).$$
Now it only remains to prove if $u$ is an integer for which there exist a cycle in $\mathcal{G}^{(1)}_{f / \Fq}$ with length $uk_i$, then $u\mid \ord_{\scalebox{0.7}{$w'(n^{k_i}-1)$}}(n^{\scalebox{0.7}{$k_i$}})$. In order to do so, we observe that if $a=\alpha^{sm+r_i}$ is an element in a cycle of length $uk_i$, then Equation~\eqref{item619} implies that $u$ is the least integer such that 
 $$\frac{n^{u k_i}-1}{n^{k_i}-1}\left((sm+r_i)(n^{k_i}-1)+\ell_i\right)\equiv 0\pmod{q-1},$$
 which implies that $u= \ord_{\scalebox{0.7}{$d(n^{k_i}-1)$}}(n^{\scalebox{0.7}{$k_i$}})$, where the is a divisor of $q-1$ coprime to $n$. In particular, $d\mid w'$ so that $\ord_{\scalebox{0.7}{$d(n^{k_i}-1)$}}(n^{\scalebox{0.7}{$k_i$}})\mid \ord_{\scalebox{0.7}{$w'(n^{k_i}-1)$}}(n^{\scalebox{0.7}{$k_i$}})$, which completes the proof of our theorem. $\hfill\qed$

\printbibliography
 
\end{document}